\title{From Regular to Strictly Locally Testable Languages
\thanks{Extended Abstract. 
}}
\author{Stefano Crespi Reghizzi
\institute{Dipartimento di Elettronica e Informazione \\ Politecnico di Milano} 
\email{crespi@elet.polimi.it}
 \and Pierluigi San Pietro
\institute{Dipartimento di Elettronica e Informazione \\ Politecnico di Milano} 
\email{sanpietro@elet.polimi.it}
}
\newtheorem{definition}{Definition}
\newtheorem{example}{Example}
\newtheorem{proposition}{Proposition}
\newtheorem{lemma}{Lemma}
\newtheorem{theorem}{Theorem}
\newtheorem{remark}{Remark}
\begin{document}

\maketitle

\begin{abstract}
A classical result (often credited to Y. Medvedev)  states that  every  language recognized by a finite automaton is the homomorphic image of a local language, over a much larger so-called local alphabet, namely the alphabet of the edges
of the transition graph. Local languages are characterized by the value $k=2$ of the sliding window width in the  McNaughton and Papert's infinite hierarchy of
strictly locally testable languages ($k$-slt). We generalize Medvedev's result in a new direction, studying the relationship between the width and the alphabetic ratio telling how much larger the local alphabet is. 
We prove that every regular language is the image of a $k$-slt language on an alphabet of doubled size, where the width logarithmically depends on the automaton size,
and we  exhibit regular languages for which any smaller alphabetic ratio is insufficient. More generally, we express the trade-off between alphabetic ratio and width  as a mathematical relation derived from a careful encoding of the states. 
At last we mention some directions for theoretical development and application.
\end{abstract}

\section{Introduction}\label{SectIntroduction}
A classical result \cite{Eilenberg74}, often credited to Y. Medvedev \cite{Medvedev1964}, 
states that every regular language is the homomorphic image of a local language over a
larger alphabet called \textit{local}. In a local language the
sentences are characterized by three sets: the initial letters,
the final letters and  the set of factors of length $k=2$.  Parameter 
$k$ is  the \textit{width} of the simplest sliding window device introduced by McNaughton and Papert \cite{McPa71}. The
result simply derives from the fact that the set of  paths in an
edge-labelled graph is a local language over the alphabet of the
edges. Considering  a finite automaton for the regular language,
the local language of accepting paths  can be naturally projected
on the original language.
\par
Our work originates from two observations. First, in the classic
result the alphabet  of the local language is
larger than the source alphabet, by a multiplicative factor, to be called the \textit{alphabetic ratio},  in the order of the square
of the number of states. 
The simplicity of sliding window machines and languages is very attractive, but the huge size of the local alphabet in Medvedev theorem makes 
 their application impractical. 
\par
Then a
natural question concerns the local alphabet in the
classical result: how small can the alphabetic ratio be? A small alphabet may, for instance, allow to encode messages from a regular language 
into an slt language, to be transmitted over a communication channel, 
so that a more economical sliding window receiver can be used instead of a general finite state machine. 
\par
\par
Second, the local languages are a member of
 McNaughton and Papert's \cite{McPa71}
infinite hierarchy of  $k$-\textit{strictly locally testable}, for short $k$-slt, languages.
Then, by considering  $k$-slt, instead of just 2-slt i.e., local languages,
we raise a more general question: what is the minimum alphabetic ratio such that, for some finite parameter $k$, every
regular language is the alphabetic homomorphism of a $k$-slt
language? In that case, how big does the width parameter $k$ need to be?
More precisely, our main result, which  generalizes Medvedev theorem, expresses the trade-off between two parameters: the alphabetic ratio and the width.
\par
We spend a few lines to show the early but enduring interest for subfamilies of regular languages characterized by some form of local testability, without entering into details.
\par 
At the basis of formal language theory, the classical theorem of N. Chomsky and  M.P. Schutzenberger  characterizes context-free languages by a homomorphism applied to the intersection of a Dyck language and a 2-slt one. Several similar characterizations for other language families have later been proved.
In mathematics, the slt languages have been applied in the theory of  semigroups by  A. De Luca and A. Restivo \cite{LucRes80}.
In linguistics, a persistent idea is that natural languages can be modeled, at various levels, by locally testable properties. For instance, the psychologist W. Wickelgren \cite{Wickelgren1969}
made the observation that the set of English words are essentially a 3-slt (finite) language, and several brain scientists (in particular V. Braitenberg \cite{Braitenberg2004}) have suggested that sequences of finite length, such as the factors occurring in a locally testable language, can be easily stored and recognized by certain neural circuits (in particular the synfire chains of  M. Abeles) that have been observed in the cortex.
In computational linguistics  locally testable definitions have proved to be useful at various levels of finite-state models. 
Many persons (e.g. \cite{GarciaVidal1990}) working on language learning models have been attracted by the efficiency of  learning algorithms for various types of locally testable languages. Contemporary comparative work on the aural pattern recognition cababilities of humans and animals \cite{RogersPullum2011} have called attention to the subregular hierarchies induced by local testability.
In mathematical biology, in his seminal article on language theory and DNA \cite{Head1987}, T. Head shows that certain splicing languages are precisely the slt languages.
\par
The paper is organized as follows. After the basic definitions in
Section \ref{SectDefinitions}, we introduce in Section
\ref{SectLowerBounds} a new classification of regular
languages based on their homomorphic characterization via a $k$-slt language over an alphabet of size $m$.
In Section \ref{SectLowerBounds} we prove a lower bound on  the alphabetic ratio. In Section \ref{SectMainResult} we state and demonstrate a generalization of Medvedev theorem, 
including a mathematical analysis of the relationship between language complexity, alphabetic ratio, and width.   The  Conclusion presents an open problem and mentions conceivable developments and applications of the main result.

\section{Preliminaries}\label{SectDefinitions}
The empty word is
denoted by $\varepsilon$. The terminal alphabet of the source language  is denoted by $A$. For simplicity  we deal only with languages in $A^+$, which do not
contain the empty word. The cardinality of an alphabet will be 
called the \textit{arity}; the arity of a language is the arity of its alphabet.
\par
 A nondeterministic finite automaton
(NFA) $M$ is a quintuple $M = (Q,A,E,q_0,F)$ where $Q$ is a finite
set of states, $A$ is a finite alphabet, the transition relation (or graph) is $E \subseteq Q \times A
\times Q$, 
$q_0 \in Q$ is the initial state; $F\subseteq Q$ is the set of final
states,  which does not contain $q_0$ (since only $\epsilon$-free languages are considered).
\\
Two transitions $(p,a,q)$ and $(p',a',q')$ are {\em consecutive} if $q=p'$. 
A {\em path} $\eta=e_0 e_1 \ldots e_{n-1}$  is a finite sequence of $n>0$ consecutive transitions 
$e_0 = (p_0, a_0,p_1)$, 
$e_1 = (p_1, a_1,p_2)$, $\dots$, 
$e_{n-1}=(p_{n-1}, a_{n-1},p_n)$.  
The {\em origin} of $\eta$ is $o(\eta)=p_0$, its {\em end} is $e(\eta)=p_n$, and its {\em label} is $l(\eta)=a_0 a_1 \dots a_{n-1}$. 
A {\em successful} path 
is a path with origin $q_0$ and end in $F$. The language recognized by $M$, denoted $L(M)$,
is the set of labels of all successful paths of $M$. 
\par
We assume, without loss of generality, that the transition relation is \textit{total}, i.e.,  
for every $q\in Q, a \in A$, set $\{p \in Q \mid (q,a,p)\in E\}\neq \emptyset$ 
(if $E$ is not total, just add a new sink state to $Q$).  
\par
Given another finite alphabet $B$, an \textit{(alphabetic)
homomorphism} 
is a mapping $\pi: B \to A$.
For a language $L'\subseteq B^{+}$,  its \textit{(homomorphic) image} under $\pi$ is the language
$L= \{\pi(x) \mid x \in L'\}$.
\par
For every word $w\in A^+$, for every $k\ge 2$, let $i_k(w)$ and
$t_k(w)$ denote the prefix and, respectively, the suffix of $w$ of
length $k$ if $|w|\ge k$, or $w$ itself if $|w|<k$. Let $f_k(w)$
denote the set of factors of $w$ of length $k$.  
Extend $i_k, t_k, f_k$ to languages as usual, i.e., $i_k(L) =
\{i_k(w) \mid w \in L\}$, $t_k(L) = \{t_k(w) \mid w \in L\}$, and
$f_k(L) = \bigcup_{w\in L} f_k(w)$. 
A  factor of a word $w$ starting at position $k$ and ending at position $h$, with $1\leq h,k \le |w|$, is defined as follows:
\[
\left\{ \begin{array}{ll}s_{k,h}(w)=\epsilon & \text{if } h<k \\
				s_{k,h}(w) = i_{h-k+1}\left(t_{|w|-k+1}(w)\right)	&	\text{otherwise }					
\end{array}
\right .
\]
Hence, for $h\ge k$, $|s_{k,h}(w)|= h-k+1$.
\begin{definition}\label{k-s.t.1}
A language $L$ is $k$-\textit{strictly locally testable}\footnote{The original name in
\cite{McPa71} is ``$k$-testable in the strict sense''. This concept should not be confused with other language families based on local tests, see \cite{Caron2000} for a recent account.}, shortly $k$-slt, 
if there exist finite sets $I_{k-1},T_{k-1}\subseteq A^{k-1}$ and $F_{k}\subseteq A^{k}$ such that, for every $x \in A^kA^*$,  the following  condition holds:
\begin{equation*}
\label{primaDef}
x\in L \iff	i_{k-1}(x)\in I_{k-1} \wedge
t_{k-1}(x)\in T_{k-1} \wedge
f_{k}(x)\subseteq F_{k}
\end{equation*}
A language is \textit{strictly locally testable} (slt) if it is $k$-slt for some $k$ to be called the {\em width}.
\end{definition}
 This definition ignores words  shorter than $k-1$, which however can be checked directly against a finite set, if needed.
The case $k=2$ corresponds to the very well known family of \textit{local
languages} (see for instance \cite{Eilenberg74} or
\cite{BerstelPin1996}).
 The following example will be referred to later.
\begin{example}\label{ex-2local}
 The language $L' = (a'a)^+ \cup (b'b)^+$ 
is $2$-slt, i.e., local, since it can be defined by the sets
$I_1 = \{a',b'\}$, $T_1 = \{a,b\}$,
$F_2 = \{a'a, b'b, aa', bb'\}$. 
\end{example}
It is known and straightforward to prove that the family of slt languages is strictly included in the family of regular languages, and it is an infinite strict hierarchy ordered by the width value.
For instance, the language $L_h = (ab^h)^+$ on
$A=\{a,b\}$, with $h>1$  a constant, is $(h+1)$-slt, but it is
not $h$-slt. In fact, $L_h$ is defined by the sets: $I_h = \{ab^{h-1}\}$, $T_h = \{b^h\}$, $F_{h+1}
= \{b^i a b^{h-i}\mid 0\le i \le h\}$. 
However, 
$L_h$ is not $h$-slt: consider the words $ab^h\in L_h$ and $ab^{h+1}\not\in L_h$:
$i_{h-1}(ab^h) = i_{h-1}(ab^{h+1})= ab^{h-2}$,
$t_{h-1} (ab^h) = t_{h-1}(ab^{h+1}) = b^{h-1}$, 
$f_{h}(ab^h) =  \{a b^{h-1}, b^h \} = f_{h}(ab^{h+1})$. 
Hence, the two words above cannot be distinguished by using width $h$. 

\section{Lower Bounds}\label{SectLowerBounds}
As said, every regular language, to be referred to as \textit{source}, is
the  image of a 2-slt language whose arity may be
much larger than the arity of the source. To talk precisely about the width of the slt language and of the  ratio of the arities of the slt and source languages, we introduce a definition.
\begin{definition}\label{defImageOfSlt}
For $k\ge 2, m\ge 1$, a language $L\subseteq A^+$ is $(m,k)$-\emph{homomorphic}  if there exist an alphabet $B$ (called \textit{local}) of arity  $m$, a $k$-slt language $L'\subseteq B^+$,  and a homomorphism $\pi: B \to A$ such that  $L=\pi(L')$.
\end{definition}
Clearly, if $L\subseteq A^+$ is $k$-slt then $L$ is trivially $(|A|,k)$-homomorphic. Otherwise,
a local alphabet larger than $A$ is needed.  
For instance, the language $L = (aa)^+ \cup (bb)^+$ is not slt
but the language $L' = (a'a)^+ \cup (b'b)^+$ of Ex.~\ref{ex-2local} is $2$-slt.
By defining $\pi:\{a,a',b,b'\}\to \{a,b\}$ as $\pi(a)=\pi(a')=a$, $\pi(b)=\pi(b')=b$,
then $L=\pi(L')$ and hence $L$ is $(4,2)$-homomorphic. The alphabetic ratio of $L'$ and $L$ is $4/2=2$.
\par
The traditional construction (e.g. in \cite{Eilenberg74}) of a 2-slt language $L'$ considers an NFA $(Q,A,E,I,F)$ of size $n =|Q|$ for $L$, 
and uses  set $E$  as local alphabet, i.e., up to $n^2 \cdot|A|$ elements.
Hence we can restate Medvedev's property saying that every regular language on $A$ is $(n^2\cdot|A|,2)$-homomorphic (the alphabetic ratio is $n^2$).
However, it is straightforward to show that the arity of the local alphabet can be reduced to $n\cdot |A|$.
\begin{proposition}\label{prop-2local}
Every regular language, accepted by an NFA with $n$ states, is $(n\cdot |A|,2)$-homomorphic.
\end{proposition}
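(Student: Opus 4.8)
The plan is to sharpen the classical Medvedev construction by encoding each transition with only its \emph{target} state instead of both endpoints, exploiting the fact that along a path the source of one edge coincides with the target of its predecessor. Starting from an NFA $M=(Q,A,E,q_0,F)$ with $|Q|=n$, I would take as local alphabet the set of pairs $B=A\times Q$, so that $|B|=n\cdot|A|$ as required, and define the homomorphism $\pi:B\to A$ by $\pi(a,q)=a$, simply forgetting the state component. The alphabetic ratio is then $|B|/|A|=n$, an improvement over the $n^2$ of the standard construction.

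Next I would exhibit a $2$-slt language $L'\subseteq B^+$ whose words are exactly the encodings of successful paths, encoding a path $e_0\cdots e_{r-1}$ with $e_i=(p_i,a_i,p_{i+1})$, $p_0=q_0$, $p_r\in F$, as the word $(a_0,p_1)(a_1,p_2)\cdots(a_{r-1},p_r)$. The three defining sets would be $I_1=\{(a,q)\mid (q_0,a,q)\in E\}$, forcing the first symbol to record a transition leaving $q_0$; $T_1=\{(a,q)\mid q\in F\}$, forcing the last symbol to record arrival in a final state; and the factor set $F_2=\{(a,p)(a',q)\mid (p,a',q)\in E\}$, which allows a symbol recording arrival in $p$ to be followed only by a symbol recording a transition that actually leaves $p$. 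The key point is that $F_2$ ignores the letter $a$ of the first symbol and uses only its state component $p$ as the source of the next transition, which is precisely why keeping the target alone suffices. The finitely many words of length $1$, which are not covered by the $k$-slt condition (stated for words in $B^2B^*$), are handled by the separate finite check admitting $(a,q)$ iff $(q_0,a,q)\in E$ and $q\in F$.

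It then remains to verify $L=\pi(L')$ by the two inclusions. For $L\subseteq\pi(L')$, given a successful path labelling a word $u\in L$, its encoding lies in $L'$ (each of the three conditions holds by construction) and maps to $u$ under $\pi$. For $\pi(L')\subseteq L$, given $w=(a_0,q_1)\cdots(a_{r-1},q_r)\in L'$ and setting $q_0$ to be the initial state, I would read off from $I_1$ that $(q_0,a_0,q_1)\in E$, from each admissible factor in $F_2$ that $(q_i,a_{i+1},q_{i+1})\in E$, and from $T_1$ that $q_r\in F$; these transitions chain into a successful path whose label is $\pi(w)$, so $\pi(w)\in L$.

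I expect the only delicate point to be the bookkeeping at the two ends: ensuring that the source state of the very first transition is recovered from $I_1$ rather than from a nonexistent predecessor symbol, and that single-symbol words are resolved by the separate finite check instead of by the factor condition. The underlying observation — that one state per position is enough, because consecutivity ties each target to the following source — is the entire content of the improvement over the $n^2$ ratio, and once the three sets are in place the verification is routine.
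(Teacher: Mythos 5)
Your proof is correct and follows essentially the same approach as the paper: encode each transition by a single state together with its letter, and let the length-2 factor set enforce consecutivity. The only difference is that you record the \emph{target} state where the paper records the \emph{source} (so your endpoint checks are dual — the paper puts the nontrivial condition into $T_1$ and yours into $I_1$), which is an immaterial symmetry.
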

\begin{proof}
Let $M=(Q,A,E,q_0,F)$ be an NFA. 
Define two mappings $\pi : Q \times A\to A$ and $\rho: Q \times A \times Q \to Q \times A $
such that $\pi(\langle q, a\rangle) = a$,  for every $a\in A$, $q\in Q$ 
and $\rho(p,a,q) = \langle p,a\rangle$ for every $p,q\in Q, a\in A$. 
The following sets define a 2-slt language $L'\subseteq \left(Q \times A\right)^+$:
\begin{align*}
I_1 & =\{\langle q_0,a\rangle \mid a\in A\};\\
F_2 & =\{\langle q,a\rangle \langle q',b\rangle \mid a,b\in A, q,q'\in Q, (q,a,q') \in E\};\\
T_1 & =\{\langle q,a\rangle \mid a\in A, \exists q'\in F: (q,a,q') \in E\}.
\end{align*}
We show first that
$\pi(L') \subseteq L$. Let $w \in \pi(L')$.
Hence, there exists $x\in L'$ such that $\pi(x) = w$. 
We claim that there exists a successful path $\eta$ of $M$ such that $x = \rho(\eta)$. 
Let $n = |w|$. Since $x \in L'$, there exist $q_1,q_2, \dots q_{n-1} \in Q$, $a_0, a_1, \dots, a_{n-1} \in A$ such that 
$x = \langle q_0, a_0 \rangle  \langle q_1, a_1 \rangle \dots \langle q_{n-1}, a_n \rangle$, and $w = a_0 a_1 \dots a_{n-1}$. 
Since $\langle q_{n-1}, a_n \rangle \in T_1$, there exists $q\in F$ such that $(q_{n-1}, a_{n-1},q) \in E$.
Let $\eta$ be $(q_0, a_0,q_1)$ $ (q_1, a_1,q_2)$ $ \dots ( q_{n-1}, a_{n-1},q)$: $\eta$ has label $w$, origin in $q_0$ and end in a final state; 
moreover, $\rho(\eta)=x$.
By definition of $F_2$, every factor 
$\langle q_{i-1},a_i\rangle \langle q_{i},a_{i+1}\rangle$ of $x$, for $1\le i\le n$, must be such that $(q_{i-1},a_i,q_i)\in E$, hence all transitions of $\eta$ are consecutive, i.e., $\eta$ is a successful path of label $w$.
\\
We show that $L \subseteq \pi(L')$. 
Let $w \in L$ be accepted by a successful path $\eta$ of $M$ of the form  
\[(q_0, a_0,q_1) (q_1, a_1,q_2) \dots ( q_{n-1}, a_{n-1},q_n),\]
 with $q_n \in F$ and  $a_0 \dots a_{n-1} = w$. 
We claim that $\rho(\eta) \in L'$. In fact, 
$i_1(\rho(\eta)) = \langle q_0, a_0 \rangle \in I_1$, 
$t_1(\rho(\eta)) = \langle q_{n-1},a_n \rangle \in T_1$ and 
$f_2(\rho(\eta)) = \{ \langle q_{i-1},a_{i-1}\rangle \langle q_i,a_i\rangle \mid 1\le i \le n\}$. 
Since each $(q_{i-1} a_{i-1} q_i) \in E$ (being a transition of $\eta$), $f_2(\rho(\eta)) \subseteq F_2$.
\end{proof}
A natural question to be later addressed, is whether, by allowing the width $k$ to be larger than 2, 
it is possible to reduce the arity of the local alphabet to less than $n\cdot |A|$. 
Next we prove the simple, but perhaps unexpected result, that the local alphabet cannot be smaller than twice the size of the source one.
\begin{theorem}\label{theorem1}
For every alphabet $A$, there exists a regular language $L\subseteq A^+$ that is not $\left(2\cdot|A|-1,k\right)$-homomorphic, for every $k\ge 2$.
\end{theorem}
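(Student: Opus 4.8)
The plan is to exhibit a single explicit witness language and to bound the number of local letters from below by a counting argument on the preimages of the homomorphism. As witness I would take $L=\bigcup_{a\in A}(aa)^+$, the finite (hence regular) union, over all source letters $a$, of the ``even power'' languages $(aa)^+=\{a^{2n}\mid n\ge 1\}$.

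The crucial observation concerns one-letter slt languages. Over a one-letter alphabet $\{c\}$, any word $c^m$ with $m\ge k$ satisfies $i_{k-1}(c^m)=t_{k-1}(c^m)=c^{k-1}$ and $f_k(c^m)=\{c^k\}$, so by Definition~\ref{k-s.t.1} the membership condition for $c^m$ does not depend on $m$. Hence a one-letter $k$-slt language is ``all or nothing'' on lengths $\ge k$: it contains either every $c^m$ with $m\ge k$ or none of them. In particular $\{c^{2n}\mid n\ge 1\}$ is not $k$-slt for any $k$, since for $m$ large enough both $c^m$ and $c^{m+1}$ exceed length $k$ yet would have to receive opposite verdicts.

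I would then argue by contradiction. Suppose $L$ is $(2|A|-1,k)$-homomorphic, witnessed by a local alphabet $B$, a $k$-slt language $L'\subseteq B^+$ and an alphabetic homomorphism $\pi\colon B\to A$ with $\pi(L')=L$. For each $a\in A$ set $B_a=\pi^{-1}(a)$; since $\pi$ is length preserving, every $x\in L'$ with $\pi(x)=a^{2n}$ is a word over $B_a$. If some $B_a$ were a singleton $\{c\}$, then the unique preimage of $a^{2n}$ would be $c^{2n}$, forcing $c^{2n}\in L'$ for all $n\ge 1$, while $c^{2n+1}\notin L'$ because $\pi(c^{2n+1})=a^{2n+1}\notin L$. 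But for $m\ge k$ the membership of $c^m$ in $L'$ is constant, so it cannot simultaneously give $c^{2n}\in L'$ and $c^{2n+1}\notin L'$ once $2n\ge k$ --- a contradiction. Therefore $|B_a|\ge 2$ for every $a$, and since the sets $B_a$ are pairwise disjoint, $|B|=\sum_{a\in A}|B_a|\ge 2|A|$, contradicting $|B|=2|A|-1$.

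The main obstacle is the one-letter lemma together with its correct transfer to the homomorphic setting: the heart of the matter is that collapsing a letter $a$ to a single local symbol $c$ leaves no room to encode the parity of the block $a^{2n}$, because local testability over one letter cannot distinguish consecutive long lengths. Once this is set up, the counting step is immediate.
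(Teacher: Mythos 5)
Your proposal is correct and follows essentially the same route as the paper: the same witness language $\bigcup_{a\in A}(aa)^+$, the same observation that a single-preimage letter forces a unary $k$-slt language to confuse $c^{2n}$ with $c^{2n+1}$ once $2n\ge k$, and the same pigeonhole count on $|B|=2|A|-1$ (run in the contrapositive direction, showing each $\pi^{-1}(a)$ needs two letters, rather than directly extracting the letter with a unique preimage). No substantive difference.
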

\begin{proof}
Let $L$ be defined by the regular expression $\bigcup_{a \in A} (aa)^*$.
By contradiction, assume that there exist $k\ge 2$ and a local alphabet $B$ of arity $2\dot|A|-1$, a mapping $\pi:B \to A$ and a $k$-slt language $L'\subseteq B^+$ such that $\pi(L')=L$.
Since $|B| = 2\cdot|A|-1$, there exists at least one symbol of $A$, say, $a$, such that there is only one symbol $b\in B$ such that $\pi(b) = a$.
Since the word $a^{2k}\in L$, there exists $x\in L'$ such that $\pi(x)=a^{2k}$. By definition of $\pi$ and of $B$,
$x = b^{2k}$.  Consider the word $x b = b^{2k+1}$.
Clearly, $\pi(x b)= a^{2k+1}$, which is not in $L$, since all words in $L$ have even length.
Hence, $x b\not\in L'$. But $i_{k-1}(x) = i_{k-1}(xb)= b^{k-1}$, $t_{k-1}(x)  = t_{k-1}(xb) =  b^{k-1}$, $f_{k}(x) = f_{k}(xb)=
 b^{k}$ and, by Definition \ref{k-s.t.1}, $x b$ is in $L'$, a contradiction.
\end{proof}
The same result holds (with a very similar proof) if in the statement the class of \textit{strictly} locally testable languages is replaced by the class of \textit{locally testable} languages\footnote{They are the boolean closure of slt languages, see \cite{McPa71}.}.
The  question whether an alphabetic ratio of two is sufficient is addressed in the next section.  

\section{Main Result}\label{SectMainResult}
The intuitive idea that by increasing the width  one can use a smaller alphabet for the slt language, is studied in detail. 
Our approach  consists of defining an slt language using a larger alphabet 
that encodes the  states traversed by the original automaton into words of fixed length.  
Our main theorem states the relationship between the language complexity in terms of number of states, the alphabetic ratio, and the width of the slt language.
\begin{theorem}\label{th-main}
If a language $L\subseteq A^+$ is accepted by a NFA with $n>1$ states, then
for every $h\ge2$, 
$L$ is $\left(h|A|,O(\frac{\lg{ n}}{\lg{h}})\right)$-homomorphic. 
\end{theorem}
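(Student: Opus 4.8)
The plan is to generalize the Medvedev/Proposition~\ref{prop-2local} construction by \emph{spelling out} each traversed state as a fixed-length word over a small alphabet, instead of storing a whole state inside a single local symbol. Fix an NFA $M=(Q,A,E,q_0,F)$ with $|Q|=n$, and set $\ell=\lceil \log_h n\rceil$, so that $\ell=O(\lg n/\lg h)$. I would pick a set $\Delta$ of $h$ ``digits'' and an injective state encoding $c:Q\to\Delta^{\ell}$, and take the local alphabet $B=\Delta\times A$, which has arity exactly $h|A|$, with $\pi(\langle\sigma,a\rangle)=a$. Thus each local symbol carries one source letter (recovered by $\pi$) plus one state digit, and a block of $\ell$ consecutive digits can name a state.

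Next I would replace $M$ by its \emph{block automaton} $M_\ell$, whose states are again $Q$ and which has a transition $p\xrightarrow{u}q$ for every $u\in A^{\ell}$ such that $M$ admits a path from $p$ to $q$ with label $u$. The intended words of the $k$-slt language $L'$ have length a multiple of $\ell$ and are read in blocks: block $t$ (positions $t\ell,\dots,t\ell+\ell-1$) carries in its digit components the encoding $c(p_t)$ of the state $p_t$ reached before the block, and in its letter components a block label $u_t$ with $p_t\xrightarrow{u_t}p_{t+1}$ in $M_\ell$. I would then instantiate the finite sets of Definition~\ref{k-s.t.1} with window $k\approx 2\ell$: the prefix set $I_{k-1}$ forces the word to begin with the digits $c(q_0)$ (anchoring both the initial state and the block phase); the factor set $F_k$ admits exactly those length-$k$ windows that, reading the two full state encodings $c(p_t),c(p_{t+1})$ straddled by the window together with the intervening letters $u_t$, witness a genuine $M_\ell$-transition; and the suffix set $T_{k-1}$ forces the last block to end in a state of $F$. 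This yields an $(h|A|,k)$-homomorphic presentation in the sense of Definition~\ref{defImageOfSlt}.

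With these definitions the inclusion $\pi(L')\subseteq L$ should follow by reading off, from any word of $L'$, the state sequence $p_0=q_0,p_1,\dots$ certified by the digits and checking, via $F_k$, that consecutive states are linked by block transitions, hence that the letter projection is the label of a successful path of $M$; the reverse inclusion $L\subseteq\pi(L')$ follows by annotating an accepting run of $M$ with the digit encodings of the states it visits. The remainder case, when $|w|$ is not a multiple of $\ell$, I would absorb into the suffix conditions $T_{k-1}$, which can recognise a final partial block ending in $F$, using the fact that Definition~\ref{k-s.t.1} anyway leaves a bounded family of short words to be fixed separately. Since $k=O(\ell)=O(\lg n/\lg h)$, the width bound is then immediate once the language is shown correct.

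The main obstacle is \emph{block alignment}: the factor conditions of an slt language are position-independent, so every length-$k$ window occurring in a legitimate word must be admitted no matter where it sits, and an adversary could in principle present a phase-shifted digit stream whose every window still looks locally valid. The prefix set anchors phase $0$ at the start, but I must guarantee that this phase \emph{propagates}, i.e.\ that from a correctly formed prefix no misaligned continuation can satisfy $F_k$. The clean way to secure this, and the step I expect to require the most care (especially in the headline case $h=2$, where only a single extra digit is available), is to make the encoding $c$ self-synchronising: reserve a short framing pattern at each block boundary that cannot occur inside the image of $c$, so that any window wide enough to contain it locates the unique block boundary. This costs only a constant factor in $\ell$, leaving $\ell=O(\lg n/\lg h)$ intact, and it turns the local factor checks into a globally unambiguous tiling, which is exactly what is needed to close the inclusion $\pi(L')\subseteq L$.
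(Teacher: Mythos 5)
Your proposal follows essentially the same route as the paper: states are encoded as fixed-length words over an $h$-ary digit alphabet paired componentwise with source letters (arity $h|A|$), the window is about twice the code length, and the block-alignment problem you correctly single out as the crux is resolved exactly as in the paper's Lemma~\ref{factor-lm}, whose ``factor-decodable'' codes end in the reserved pattern $00$ and contain no other occurrence of it --- precisely your self-synchronising framing device (the paper's Lucas-sequence count confirms your expectation that this costs only a constant factor in $\ell$, even for $h=2$). The construction and the key lemma coincide, so your sketch is a correct outline of the paper's own proof.
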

The rest of the section is devoted to the proof. Special care is devoted to find a very succinct encoding  of the original states into strings of the local alphabet, in order to reach 
the minimal alphabetic ratio. 
Since it may be important for applications, our encoding produces also a small, although not optimal, width of the slt language.
The proofs are organized so that 
the main lemmas hold, independently of the chosen encoding, which only affects the numerical results. 
This organization has the advantage that the proof is essentially unaffected by the encoding. 
\par
The next definitions set the base for stating the properties a good encoding should have. Only fixed-length encodings are considered.
Let $D$ be a finite alphabet.
Let $M = (Q,A,E,q_0,F)$ be a NFA, where $E$ is total, and let 
$n = |Q|>1$. 
\par
Given an integer $m\ge \lceil lg_{|D|} (|Q|)\rceil$, a {\em code} of $Q$ into $D$ of length $m$ is a mapping $[\text{\;}]:Q \to
D^m$ such that for every  $p,q\in Q$, if $p\neq q$ then $[p] \neq [q]$.
Consider a word $x$ that is a factor of $[Q^+]$. We want to decode $x$ to one
state. This will be useful when defining a slt language whose homomorphic image is $L(M)$.   
If $|x|\ge 2m$, since $x$ may include the concatenation of $[q]$ and
$[p]$, $q,p\in Q$, it is not decodable to just one state symbol; moreover, if $|x|<2m-1$ then $x$
may not
contain any factor of the form $[q]$. However, if $|x|$ is exactly $2m-1$, then the word is
bound to include at least one factor of the form $[q]$, for some $q \in Q$, which can be
decoded to $q$.  In addition, we want this decoding to be unique. 
\par
The traditional notion of decodability 
 (for every $x,y\in Q^+$, if $[x]=[y]$ then $x=y$) is not adequate, since
it assumes that the word to be decoded is a string in $[Q^+]$, while we need to consider a {\em factor}
of $[Q^+]$.
A word $x\in D^{2m-1}$ is said to be
{\em factor-decodable} if there exists one, and only one, position $j$, $1\le j \le m-1$, 
such that there exists $q \in Q$: $s_{j,j+m}(x) = [q]$.
A code $[\text{\;}]:Q\to D^m$ is  {\em factor-decodable}  if every
word in $f_{2m-1}([Q^+])$ is factor-decodable. 

\begin{lemma}\label{factor-lm}
For all finite alphabets $Q,D$ of cardinalities $n=|Q|$ and $h=|D|$, with $n\ge 2$, $2\le h< n$, there
exists a factor-decodable code of $Q$ into $D$ of length $m=\lceil g(h) +  f(h)\lg_2{n}\rceil\ge 3$,
with: 
\begin{align*}
f(h) &= \lg^{-1}_2 {\left(h-1+\sqrt{(h-1)(h+3)}\right)} -1 \\
g(h) &= 1 + \frac{f(h)}{2}\left(\lg_2(h-1)+ \lg_2(h+3)\right).
\end{align*}
\end{lemma}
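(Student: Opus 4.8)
The plan is to keep two concerns separate --- \emph{synchronization} (being able to locate codeword boundaries) and \emph{richness} (having at least $n=|Q|$ codewords) --- and to treat each with its own device. First I would restate factor-decodability as a pure overlap condition on the code $[\;]$. A window $x$ of length $2m-1$ sliding over $[Q^+]$ straddles at most three consecutive codewords, and since $2m-1\ge m$ it always contains exactly one completely aligned codeword; hence the \emph{existence} of a decoding position $j$ with $s_{j,j+m}(x)=[q]$ is automatic, and the whole force of the lemma is \emph{uniqueness}: no length-$m$ factor equal to some $[q]$ may occur at a misaligned position. In other words the target is a self-synchronizing, comma-free--like code, and I want its growth rate as large as possible so that $m$, and with it the alphabetic ratio $h$, stays minimal.

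For synchronization I would reserve one letter of $D$, say $\#$, as a boundary marker and give every codeword the shape $\# z \#$, where the payload $z\in D^{\,m-2}$ neither begins nor ends with $\#$ and contains no occurrence of the doubled factor $\#\#$; crucially, \emph{isolated} $\#$'s inside $z$ are allowed, and this is exactly what will keep the code rich. Then every concatenation reads $\dots \# z_i \#\# z_{i+1} \# \dots$, so $\#\#$ occurs in $[Q^+]$ precisely over each junction, at regular distance $m$, and never inside a codeword. Since a window of length $2m-1$ exceeds $m$, it always exposes one such $\#\#$, which localizes the genuine boundary and thereby the one aligned codeword. Distinctness of payloads gives distinctness of codewords, so $[\;]$ is indeed a code.

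It then remains to count admissible payloads and to pick $m$ so that at least $n$ of them exist. Writing $a_m$ for the number of $h$-ary strings of length $m$ that avoid the doubled letter $\#$, a one-step transfer argument (tracking whether the last letter is $\#$) yields
\[
a_m=(h-1)\,(a_{m-1}+a_{m-2}),
\]
whose dominant characteristic root is $\tfrac12\left(h-1+\sqrt{(h-1)(h+3)}\right)$, that is $\lambda/2$ with $\lambda:=h-1+\sqrt{(h-1)(h+3)}$. Note $\lambda/2>h-1$, so allowing isolated $\#$'s genuinely beats the naive ``$\#$ forbidden in the payload'' code of rate $h-1$; this is the point of the construction. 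The number of admissible payloads therefore grows like $(\lambda/2)^{m}$ up to a fixed factor coming from the marker positions and the endpoint constraints on $z$, and solving ``count $\ge n$'' for $m$ produces a threshold of the form $m\ge g(h)+f(h)\lg_2 n$: the exponential rate contributes $f(h)=\left(\lg_2(\lambda/2)\right)^{-1}=\left(\lg_2\lambda-1\right)^{-1}$, while the constant $g(h)$ collects the unit shift from the marker letters and the $\tfrac12\lg_2\!\left((h-1)(h+3)\right)$ correction coming from the eigenvector normalization at the payload endpoints, matching $g(h)=1+\tfrac{f(h)}{2}\left(\lg_2(h-1)+\lg_2(h+3)\right)$; the bound $m\ge 3$ follows since $h<n$ forces $n\ge 3$.

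The step I expect to be the real obstacle is uniqueness, not boundary location. Although $\#\#$ occurs only at junctions, a \emph{codeword-shaped} block (length $m$, bordered by single $\#$'s, internally $\#\#$-free) can in principle start at an isolated $\#$ inside one payload and end at an isolated $\#$ inside the next, straddling a junction; if such a shifted block happened to coincide with an actual codeword, uniqueness would fail. Ruling this out while retaining the full avoid-$\#\#$ growth rate $\lambda/2$ is where the technical work concentrates: one must either impose a mild additional positional constraint on payloads or select the $n$ codewords from the $(\lambda/2)^{m}$ candidates so that no shifted border coincides with a chosen codeword, and then verify that this refinement costs only a constant factor and hence leaves $f(h)$ and the leading behaviour of $g(h)$ unchanged. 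Establishing this tight match between the minimal synchronizing constraint and the exact algebraic rate $\lambda$ is the crux; the surrounding estimates are routine.
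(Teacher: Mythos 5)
Your construction is essentially the paper's: the paper makes every codeword \emph{end} with a doubled reserved letter ($s_{m-1,m}([q])=00$, with no other occurrence of $00$ inside $[q]$), whereas you put a single marker at \emph{both} ends so that the doubling $\#\#$ materializes at each junction; either way one counts marker-doubling--avoiding payloads, gets the same recurrence $(h-1)(a_{m-1}+a_{m-2})$, the same dominant root $\tfrac12\bigl(h-1+\sqrt{(h-1)(h+3)}\bigr)$, and hence the same $f(h)$ and (up to an additive endpoint term) the same $g(h)$. The one genuine defect in your write-up is that you end by declaring uniqueness --- the heart of factor-decodability --- an unresolved ``crux'' that may force ``a mild additional positional constraint'' or a pruning of the codeword set. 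No such repair is needed, and leaving this open leaves the lemma unproved as written. The point you are missing is that your own codewords contain \emph{no} occurrence of $\#\#$ whatsoever: the word $\#z\#$ has a non-$\#$ letter in positions $2$ and $m-1$ (since $z$ neither begins nor ends with $\#$) and $z$ is internally $\#\#$-free. On the other hand, in $[Q^+]$ the factors $\#\#$ occur exactly at positions $(im,im+1)$, with exact period $m$, and an elementary index count shows that every length-$m$ window starting at a position $\not\equiv 1 \pmod m$ covers one of these junctions entirely. Hence a misaligned window always contains $\#\#$ and can never equal a codeword; the block you worry about, which ``straddles a junction'' while being ``internally $\#\#$-free,'' simply does not exist. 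Uniqueness is therefore immediate from the design, and your proof closes with no extra constraint and no loss in the growth rate.

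A secondary, quantitative caveat: by forbidding the payload to \emph{begin} with the marker you count strictly fewer codewords than the paper's $S(m)$, which does admit words such as $0d\cdots 00$ (marker in first position, provided $00$ still occurs only as the suffix); e.g.\ for length $4$ you get $(h-1)^2$ codewords against the paper's $(h-1)h$. The exponential rate, hence $f(h)$, is unaffected, but your additive constant is larger by roughly $f(h)\lg_2\frac{h}{h-1}$, which for $h=2$ can cost one or two extra units of $m$ and so is not automatically absorbed by the ceiling in the stated formula $m=\lceil g(h)+f(h)\lg_2 n\rceil$. If you want the lemma with its exact constants you should either drop the ``does not begin with $\#$'' restriction (re-checking that a leading marker cannot create a spurious junction, which it cannot since the preceding codeword ends in a non-marker followed by $\#$ --- wait, in your scheme it ends in $\#$, so a leading $\#$ in the payload \emph{would} create $\#\#\#$; this is precisely why the paper attaches the doubled marker asymmetrically to one end) or accept a slightly larger $g(h)$ and note that the paper's own numerical table already uses a $g(h)$ exceeding the one displayed in the lemma.
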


\noindent{\bf Sketch of the proof.} Let $0\in D$ be a symbol. The idea is to 
let code $[\text{\;}]$ be such that for every $q \in Q$, $[q]$ ends with the word $00$, i.e., 
$s_{m-1,m}([q])=00$ and there is no other occurence of 00 in $[q]$. 
Formally, for every $i$, $1\le i
\le m-1$, if 
$s_{i,i+1}([q])=00$ then 
$i=m-1$. 
This is enough for factor-decodability. 
 To find how large $m$ must be as a function of $h$ and $n$, first consider, for
every $m\ge 2$, the set $S(m)$ of words in $D^m$ such that
$x \in S(m)$ if $x$ has suffix 00 and in $x$ there is no other occurrence of $00$. If
$|S(m)|\ge n$, then it is possible to assign a distinct word in $S(m)$ to every state of $Q$.
The definition of $S(m)$ is by induction on $m\ge 2$. 
$S(2) = \{00\}$, i.e., the only word in $S(2)$
is 00. $S(3) = \{d00 \mid d \in D-\{0\} \}$. 
Given sets $S(m-1), S(m-2)$, let $S(m)$ be: 
\[\left\{dy \mid d \in D-\{0\}, y\in  S(m-1) \right\} \cup \left\{0dx \mid  d \in D-\{0\}, y\in 
S(m-2) \right\}.\]
Hence, $|S(2)| = 1$, $|S(3)| = h-1$ and  
\[|S(m)| = (h-1)\left|S(m-1)\right| + (h-1)\left|S(m-2)\right|.\]
This recurrence relation is strictly connected to the so-called Lucas sequence $U_m(P,Q)$, where
 $P,Q$\footnote{Beware that $Q$ is not the set of states.} are integers (see, e.g, p. 395 of \cite{Dick19}):
$U_1(P,Q) = 1$, $U_2(P,Q) = P$, and for $m\ge 3$, $U_m(P,Q) = PU_{m-1}(P,Q) - Q U_{m-2}(P,Q)$. For $P=1, Q=-1$ this is just a Fibonacci sequence.
If $P^2 -4Q\ge 0$, a closed-form solution for every $m>0$ is $U_m(P,Q) = \frac{a^m -b^m}{a-b}$, where 
$a = \frac{P+\sqrt{P^2 -4Q}}{2}, b = \frac{P-\sqrt{P^2 -4Q}}{2}$.
With standard algebraic manipulations and by defining $f(h), g(h)$ as in the statement of the Lemma, one can derive that: 
$|S(m)|\ge n$ is satified if
$m=\lceil g(h) +f(h) \lg_2{n}\rceil\hspace{8cm}$.

\begin{remark}
Both  $f(h)$ and $g(h)$ are monotonically decreasing with $h$, although very slowly for large $h$, with $\lim_{h \to \infty}{f(h)\lg_2 h}  = 1$, $\lim_{h \to \infty}{ g(h)} = 2$. 
with, moreover,  $0<f(h)\lessapprox 1.44$, $2\le g(h) \lessapprox 4.11$. The expression for $m$ is $O\left(\frac{\lg n }{\lg h}\right)$. 
By definition of a code, $m$ cannot be smaller then $m_{min} = \lceil \frac{\lg_2 n }{\lg_2 h}\rceil$, i.e., $m$ is $\Omega\left(\frac{\lg n }{\lg h}\right)$, hence the code of Lemma~\ref{factor-lm}
is asymptotically optimal. In particular, the ratio $m/m_{min}$, where $m$ is computed by the above formula, is dominated by term $f(h) \lg_2 h\lessapprox 1.44$, which is very close to $1$ for $h\ge 3$. 
Hence, no encoding can significantly improve $f(h)$ (or $g(h)$), decreasing $m/m_{min}$. A few examples of approximated values for $f(h)$, $g(h)$, and $f(h) \lg_2 h$ are:

\begin{center}
\begin{small}
\begin{tabular}{|r||r|r|r|r|r|r|}
\hline$h$ & 2 & 3 & 4 & 10 & 100 & 1000\\
\hline$f(h)$&  1.44 & 0.68 & 0.52 & 0.29 & 0.15 & 0.10 \\
\hline $g(h)$ & 4.11 & 2.92 & 2.66 & 2.34 & 2.15 & 2.10\\
\hline $f(h)\lg_2 h$ & 1.44 & 1.09 & 1.04 & 1.00 & 1.00  & 1.00\\\hline
\end{tabular}\end{small}
\end{center}
\end{remark}

To prove Th.~\ref{th-main}, a few more definitions are required. 
Define the following alphabetic homomorphisms:
$\alpha: A \times D \to A$, $\delta: A \times D \to D$ are such that $\alpha(a,d) = a$, $\delta(a,d)=d$
for every $a\in A, d \in D$.
\\
A path of $M$ of length $t\ge 0$ is called a $t$-path. 
Paths $\eta_1, \eta_2, \dots, \eta_k$ of $M$, $k\ge 2$, are called {\em consecutive}  if 
$\eta_1 \eta_2 \dots \eta_k$ is also a path
of $M$ (i.e, $e(\eta_h) = o(\eta_{h+1})$, for all $1\le h \le k-1$).
With an abuse of notation, let $[\text{\;}]: (Q\times A \times Q)^* \to (A \times D)^*$ be defined on paths as follows. 
Let $\eta$ be a $t$-path. If $t=0$ then $[\eta]=\epsilon$; 
if $1\le t \le m$, let $[\eta]$ be the unique word $z$ in $(A \times D)^m$ such that $\alpha(z)= l(\eta)$, 
$\delta(z)= i_{t}([o(\eta)])$ (i.e., $\delta(z) = [o(\eta)]$ if $\eta$ is a $m$-path). 
\\
If $|\eta|>m$, then there exist a unique $k\ge 1$ and a unique $0\le j \le m-1$ such that $|\eta| = km+j$; 
hence, there exist $k+1$
consecutive paths of $M$, denoted by $\eta_1, \eta_2, \dots,  \eta_k, \eta_{k+1}$ such that 
$\eta = \eta_1 \eta_2 \dots  \eta_k \eta_{k+1}$, each $\eta_h$, $i\le h \le k$, is a $m$-path and 
$\eta_{k+1}$ is a $j$-path. This decomposition in
consecutive paths is called the {\em canonical decomposition} of $\eta$. 
Then, $[\eta]$ is defined as 
$[\eta_1] [\eta_2] \dots  [\eta_k] [\eta_{k+1}]$.
\par
Let $L'$ be the $2m$-slt language defined by the following sets:
\begin{eqnarray*}
 I_{2m-1} & = & i_{2m-1} \left(\{ [\eta' \eta''] \mid \eta', \eta'' \textrm{ are  consecutive $m$-paths of
} M \land  \delta([\eta'])=[q_0]\}\right);\\
 F_{2m}\;\;\; & = & f_{2m}\left(\{[\eta' \eta'' \eta''']\mid \eta', \eta'', \eta''' \textrm { are
consecutive $m$-paths of } M\}\right);\\
 T_{2m-1} &= & t_{2m-1} (\{[\eta' \eta'' \eta''']\mid  \eta', \eta'', \eta''' \textrm {  are
consecutive paths of } M, |\eta'|=|\eta''| = m, \\
& & 0\le |\eta'''|< m, e(\eta'' \eta''')\in F\}).
\end{eqnarray*}

The proof of the following lemma follows from uniqueness of factor-decodability:
\begin{lemma}\label{lm-decod}
Let $[\text{\;}]:Q\to D^m $ be a factor-decodable code. 
For all $z\in F_{2m}$, there exist a 
position $j$, $1\le j \le m-1$, 
and two consecutive paths $\eta_1, \eta_2$ of $M$ such that:  
\begin{enumerate}
\item $\eta_1$ is a $m$-path, 
and $t_{2m-j+1}(z) = [\eta_1] [\eta_2]$; 
\item for any two consecutive paths of $M$, $\eta_{I}, \eta_{II}$, 
if  $\eta_I$ is a $m$-path and $[\eta_I] [\eta_{II}]$ is a suffix of 
$z$ then $[\eta_1] = [\eta_{I}]$ and
$[\eta_2] =[\eta_{II}]$;
\item if $\delta(i_{m}(z))=[q]$ for some $q \in Q$, 
then $j=1$, $\eta_2$ is a $m$-path and $o(\eta_1) =q$;
\item if $t_{2m-1}(z) \in T_{2m-1}$, then $e(\eta_1 \eta_2) \in F$.
\end{enumerate}
\end{lemma}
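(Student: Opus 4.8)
The plan is to reduce every one of the four claims to the $D$-component of $z$ and then invoke factor-decodability of the code. By definition of $F_{2m}$, the word $z$ is a length-$2m$ factor of $[\eta'\eta''\eta''']$ for some consecutive $m$-paths $\eta',\eta'',\eta'''$, so applying $\delta$ the projection $\delta(z)$ is a length-$2m$ factor of $[o(\eta')][o(\eta'')][o(\eta''')]\in[Q^+]$. Hence every length-$(2m-1)$ factor of $\delta(z)$, in particular $\delta(i_{2m-1}(z))$, lies in $f_{2m-1}([Q^+])$ and is factor-decodable.

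For item~(1) I would read off from factor-decodability the unique position $j$, $1\le j\le m-1$, and the state $q$ for which a copy of $[q]$ occupies positions $j,\dots,j+m-1$ of $\delta(z)$. I then reconstruct both paths directly from $z$: let $\eta_1$ be the $m$-path with origin $o(\eta_1)=q$ (legitimate since the code is injective) whose label is the $A$-component $\alpha$ read off positions $j,\dots,j+m-1$, and let $\eta_2$ be the path decoded from the remaining suffix, with origin $e(\eta_1)$. Because $z$ is a factor of a genuine encoding, the decoded window realigns with one of the original $m$-path boundaries, so $\eta_1$ and $\eta_2$ are honest consecutive paths of $M$, and $t_{2m-j+1}(z)=[\eta_1][\eta_2]$ holds by construction.

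Items~(2) and~(3) I expect to fall out of the \emph{uniqueness} clause of factor-decodability. For~(2), if $\eta_I$ is an $m$-path and $[\eta_I][\eta_{II}]$ is a suffix of $z$, then $\delta([\eta_I])=[o(\eta_I)]$ is a state code sitting at the head of $[\eta_I]$ inside $\delta(z)$; uniqueness forces this occurrence to coincide with the one at position $j$, whence $o(\eta_I)=q=o(\eta_1)$ and, comparing $A$-components, $l(\eta_I)=l(\eta_1)$, so $[\eta_I]=[\eta_1]$ and then $[\eta_{II}]=[\eta_2]$. For~(3), the hypothesis $\delta(i_m(z))=[q]$ places a state code at position~$1$, so uniqueness gives $j=1$; then $[\eta_1]$ fills the first $m$ positions, exactly $m$ positions remain for $[\eta_2]$, so $\eta_2$ is an $m$-path, and $o(\eta_1)=q$.

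For item~(4) I would unfold the definition of $T_{2m-1}$: the hypothesis $t_{2m-1}(z)\in T_{2m-1}$ means $t_{2m-1}(z)=t_{2m-1}([\sigma'\sigma''\sigma'''])$ for consecutive paths with $\sigma',\sigma''$ being $m$-paths, $0\le|\sigma'''|<m$, and $e(\sigma''\sigma''')\in F$. The trailing $(m\text{-path})(\text{path})$ factorisation of this suffix is precisely the one singled out by the unique decoding of item~(2), so the reconstructed $\eta_1,\eta_2$ must coincide with the final portion of $\sigma'\sigma''\sigma'''$, giving $e(\eta_1\eta_2)=e(\sigma''\sigma''')\in F$. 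The hard part throughout is this lifting step: factor-decodability is a purely $D$-level statement about an \emph{arbitrary} factor $z$, whose phase relative to the original $m$-path boundaries is unknown in advance, so the crux is to argue that the unique decodable window must realign with a genuine path boundary, which is what guarantees that $\eta_1,\eta_2$ are real consecutive paths and that their end state is inherited from the encoding witnessing membership in $F_{2m}$ or $T_{2m-1}$.
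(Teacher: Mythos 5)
Your overall route---project $z$ onto its $D$-component, observe that $\delta(z)$ is a factor of $[Q^+]$ because $z$ is a factor of $[\eta'\eta''\eta''']$ for consecutive $m$-paths, then invoke existence and uniqueness of the decodable position---is exactly the one the paper indicates (the paper offers no more than the remark that the lemma ``follows from uniqueness of factor-decodability''). But your plan contains a step that fails as written, and it sits precisely at the place you yourself flag as the crux. In an NFA a path is not determined by its origin and its label, so ``the $m$-path with origin $q$ whose label is the $A$-component of positions $j,\dots,j+m-1$'' is not a well-defined object, and ``the path decoded from the remaining suffix, with origin $e(\eta_1)$'' presupposes an end state that $z$ does not encode at all: $\delta(z)$ records only (prefixes of) the codes of the \emph{origins} of the $m$-blocks, never intermediate or end states. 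The witnesses $\eta_1,\eta_2$ must instead be taken to be actual sub-paths of the path $\eta'\eta''\eta'''$ witnessing $z\in F_{2m}$, cut at the genuine block boundary; uniqueness of factor-decodability is then used in the opposite direction, to show that the unique decodable position $j$ coincides with that genuine boundary, which is what makes $t_{2m-j+1}(z)=[\eta_1][\eta_2]$ true.

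The same confusion breaks your argument for item (4). From uniqueness (your item (2)) you can conclude only $[\eta_1]=[\sigma'']$ and $[\eta_2]=[\sigma''']$, i.e., equality of \emph{encodings}; two paths with the same encoding share origin and label but, by nondeterminism, need not share their end state. Hence $e(\eta_1\eta_2)\in F$ does not follow for the particular pair $\eta_1,\eta_2$ you extracted from the $F_{2m}$-witness: that pair and the pair $\sigma'',\sigma'''$ witnessing $t_{2m-1}(z)\in T_{2m-1}$ may be genuinely different paths with the same code, ending in different states. The repair is to exploit that the lemma is an existence statement: when $t_{2m-1}(z)\in T_{2m-1}$, choose $\eta_1=\sigma''$ and $\eta_2=\sigma'''$ as the existential witnesses; items (1)--(3) still hold for this choice because they constrain only the codes $[\eta_1],[\eta_2]$ and the position $j$, and item (4) then holds by the definition of $T_{2m-1}$.
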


\begin{lemma}\label{lm-slt}
There exists a finite language $L''\subseteq A^+$ such that $\alpha(L') \cup L'' = L(M)$. 
\end{lemma}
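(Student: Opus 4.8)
The plan is to separate the claim by word length at the threshold $2m$. I would set $L''=L(M)\cap A^{<2m}$; this is finite (it is contained in $A^{<2m}$) and satisfies $L''\subseteq L(M)$, so the lemma will follow once I establish the two halves of a length-$2m$ equivalence: \textbf{(soundness)} $\alpha(L')\subseteq L(M)$, and \textbf{(completeness)} every $w\in L(M)$ with $|w|\ge 2m$ lies in $\alpha(L')$. Indeed, soundness gives $\alpha(L')\cup L''\subseteq L(M)$, while completeness together with $L''\supseteq L(M)\cap A^{<2m}$ gives $\alpha(L')\cup L''\supseteq L(M)$, whence equality. Since being $2m$-slt only constrains words of length $\ge 2m$, the short words are exactly what $L''$ is for; the finitely many short words that $L'$ may contain have length $2m-1$ and, as noted below, also decode to short successful paths, so they do not violate soundness.

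For completeness I would start from $w\in L(M)$ with $|w|\ge 2m$ and fix a successful path $\eta$ of $M$ with $l(\eta)=w$. Forming the encoding $[\eta]\in (A\times D)^{|w|}$, the definitions of $[\text{\;}]$ and $\alpha$ give $\alpha([\eta])=l(\eta)=w$, so it remains to show $[\eta]\in L'$. As $|[\eta]|=|w|\ge 2m$, membership in the $2m$-slt language $L'$ amounts to the three inclusions $i_{2m-1}([\eta])\in I_{2m-1}$, $t_{2m-1}([\eta])\in T_{2m-1}$, and $f_{2m}([\eta])\subseteq F_{2m}$. Reading off the canonical decomposition of $\eta$ into consecutive $m$-paths, each of these is essentially immediate from the way the three sets were defined: the first $m$-block of $[\eta]$ carries $\delta$-value $[q_0]$ (matching $I_{2m-1}$), any window of length $2m$ sits inside the encoding of three consecutive $m$-paths (matching $F_{2m}$), and the terminal window ends at a final state (matching $T_{2m-1}$). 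Hence $[\eta]\in L'$ and $w\in\alpha(L')$.

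For soundness I would take $x\in L'$ and reconstruct a successful path from it. Every window $z\in f_{2m}(x)\subseteq F_{2m}$ is a factor of the encoding of three consecutive $m$-paths, so its $\delta$-projection is a factor of a word in $[Q^+]$; Lemma~\ref{factor-lm} makes such factors factor-decodable, and Lemma~\ref{lm-decod} then decodes each $z$ uniquely, at a unique position, into two consecutive paths $\eta_1,\eta_2$ with $\eta_1$ an $m$-path and $[\eta_1][\eta_2]$ a suffix of $z$. Using the uniqueness in parts~1--2 of Lemma~\ref{lm-decod}, I would show that the decodings of overlapping windows agree on the state assigned to each position at which the $D$-component spells a complete codeword, so the local fragments stitch together into a single well-defined path $\eta$ with $[\eta]=x$. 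The hypothesis $i_{2m-1}(x)\in I_{2m-1}$ with part~3 forces $o(\eta)=q_0$, and $t_{2m-1}(x)\in T_{2m-1}$ with part~4 forces $e(\eta)\in F$, so $\eta$ is successful and $\alpha(x)=\alpha([\eta])=l(\eta)\in L(M)$. The same decoding applied to the length-$(2m-1)$ words of $L'$ yields short successful paths, covering the short case.

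The main obstacle is the stitching step inside soundness: factor-decodability guarantees only that each $2m$-window decodes locally and uniquely, and the real content is to verify that consecutive overlapping windows assign the \emph{same} state at every checkpoint position (those where the $D$-component exhibits a full codeword $[q]$), so that the locally decoded pieces compose into one global path running from $q_0$ to a final state. Lemma~\ref{lm-decod} is tailored precisely to supply this agreement, so the residual work is careful position bookkeeping and the treatment of the final window that contains the remainder of length less than $m$ from the canonical decomposition; by contrast I expect the completeness direction and the finite correction $L''$ to be routine.
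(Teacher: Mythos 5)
Your proposal follows essentially the same route as the paper: split off a finite set of short words into $L''$, prove that long words of $L(M)$ land in $\alpha(L')$ via the canonical decomposition into consecutive $m$-paths and the definitions of $I_{2m-1}$, $F_{2m}$, $T_{2m-1}$, and prove $\alpha(L')\subseteq L(M)$ by using factor-decodability and Lemma~\ref{lm-decod} to stitch the locally decoded windows into a single successful path. The only real difference is your cut-off for $L''$ ($2m$ instead of the paper's $3m$), which is immaterial since both sets are finite and totality of $E$ lets the completeness direction go through already at length $2m$.
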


\noindent{\bf Sketch of the Proof} Let $L''$ be the  set of words in $L(M)$ of length less than $3m$. 
\par
Part (I): $(L(M)- L'')\subseteq \alpha(L(M')$. 
Assume that $x \in L(M), |x| \ge 3m$. To show that there exists a successful path $\eta$ of $M$ such that $l(\eta) = x$, we first claim the following result for every path, whether successful or not:
\[ 
(*) \text{ for all paths } \eta \text{ of } M, \text{ with } |\eta|\ge 3m, f_{2m}([\eta]) \subseteq F_{2m}.\]

The proof of (*) is on induction on the 
the canonical decomposition of $\eta$.
Part (I) can now be completed. For all $x \in L(M)- L''$, let $\eta$ be a successful path of $M$ 
with $l(\eta) = x$; moreover, let $\eta_1, \dots, \eta_k, \eta_{k+1}$ 
be the canonical decomposition of $\eta$. 
By (*), $f_{2m}([\eta]) \subseteq F_{2m}$. 
But $\eta$ is successful:   $o(\eta) = o(\eta_1) = q_0$, hence 
$i_{2m-1}([\eta])= i_{2m-1}([\eta_1][\eta_2]) \in I_{2m-1}$; 
$e(\eta)\in F$, hence $t_{2m-1}([\eta])=t_{2m-1}\left([\eta_{k-1} \eta_k \eta_{k+1}]\right) \in T_{2m-1}$.
Therefore, $[\eta] \in L'$.
\par
Part (II):  $\alpha(L') \subseteq L(M)$.  
The proof needs the assumption that code $[\text{\;}]$ is factor-decodable.
The following property can be proved by induction on $k\ge 2$, by applying  Lemma~\ref{lm-decod}: 
\begin{quotation}
\noindent (+) for all words $z \in (A\times D)^+$, $|z|\ge 2m$, if $f_{2m}(z)\subseteq F_{2m}$ and
$i_{2m-1}(z) \in I_{2m-1}$ 
then there exists
a path $\eta$ of $M$ such that $z = [\eta]$  and
$o(\eta)= q_0$. 
\end{quotation}
The proof of Part (II) follows from (+). 
In fact, if $x \in \alpha(L')$, with $|x|\ge 3m$, then there exists $z \in L'$ such that $x = \alpha(z)$. 
Since in this case $i_{2m-1}(z) \in I_{2m-1}$,
$f_{2m}(z) \subseteq F_{2m}$, $t_{2m-1}(z) \in T_{2m-1}$, by (+) there exists a path $\eta$ of
$M$ with origin in $q_0$ and such that $z = [\eta]$.
Let $\eta_1, \eta_2, \dots, \eta_k, \eta_{k+1}$ be the canonical decomposition of $\eta$,  
with $|\eta| = km+j$, $k\ge3$ and $0\le j\le m-1$
(hence $|\eta_{k+1}| = j$). 
Let $w = t_{2m-1}([\eta_{k-1}] [\eta_k][\eta_{k+1}])$ and consider 
$t_{2m-1}(z) = t_{2m-1}([\eta]) =$ $t_{2m-1} ([\eta_{k-1}] [\eta_k] [\eta_{k+1}]) = w$.
Apply Lemma~\ref{lm-decod}, Part (1), to $w\in F_{2m}, w \in T_{2m-1}$. 
Hence, there exist a position $h$ and consecutive $\eta', \eta''$, with
$\eta'$ a $m$-path, such that 
$t_{2m-h}(w) = [\eta'][\eta'']$. Since $[\eta_k] [\eta_{k+1}]$ (of length $m+j\le2m-1$) is also a suffix of $w$, 
by Part (2) of Lemma~\ref{lm-decod}, $[\eta_k] = [\eta'], [\eta_{k+1}]=[\eta'']$.
Since $o(\eta_k) = o(\eta')$, also paths $\eta_{k-1}, \eta'$ are consecutive. Hence, 
$z =  [\eta] = [\eta_1 \dots \eta_{k-1} \eta_k \eta_{k+1}] = [\eta_1 \dots \eta_{k-1}] [ \eta_k
\eta_{k+1}] = [\eta_1 \dots \eta_{k-1}]  [\eta' \eta''] = [\eta_1 \dots \eta_{k-1} \eta' \eta'']$.
Therefore,  path $\eta_1 \dots \eta_{k-1} \eta' \eta''$
has label $z$, origin $q_0$, and end $e(\eta' \eta'')$ in $F$, i.e., it is successful: $x \in L$.

The proof of Th.~\ref{th-main} is now immediate.
By Lemmas ~\ref{factor-lm} and~\ref{lm-slt}, 
$ m  =  \lceil g(h) + f(h)\lg_2 n\rceil$, 
and $L'$ is $2m$-slt. Hence, $L$ is $(2|A|,2\lceil g(h) + f(h)\lg_2 n\rceil)$-slt, with $2\lceil g(h) + f(h)\lg_2 n\rceil$
being $O\left(\frac{\lg{n}}{\lg{h}}\right)$. 
A few examples of width for various values of number $n$ of states and alphabetic ratio $h$ are shown here. 

\begin{center}\begin{small}
                                    
\begin{tabular}{|r||r|r|r|r|r|r|}\hline
 \backslashbox{$h$}{$n$} & 10 &  $10^3$& $10^6$ & $10^9$ & $10^{40}$ \\\hline\hline 
 2 & 18 & 38 & 66 & 94 & 392\\ \hline
 3  & 12 & 20 & 34 & 48& 190 \\ \hline
 4 & 10 & 16 & 28 & 38 & 144 \\ \hline
 10  &8 & 12 & 18 & 24 & 86\\ \hline
 100  & 6& 8 & 12 & 14 & 46 \\ \hline
 1000 & 6 & 8 & 10 & 12 & 32\\\hline
\end{tabular}
\end{small}\end{center}

%
%
Hence, by enlarging the local alphabet, a smaller width suffices to construct the slt language.
However, it is useless to take an alphabetic ratio $h\ge n$, since in this case one can use the simpler construction of Prop.~\ref{prop-2local}.
To finish, we note that for many regular languages one can obtain a homomorphic definition that uses lower values of alphabetic ratio and/or width than those obtained by the main theorem.

\section{Conclusion}\label{SectConclusion}
We have generalized Medvedev's homomorphic characterization of regular languages:  
instead of using as generator a local language over a large alphabet, 
which depends on the complexity of the regular language, we can use a strictly locally testable language over a smaller alphabet that does not depend on complexity, but just on the source alphabet. 
We have proved that the smallest alphabet one can use in the generator is the double of the alphabet of the regular language; thus, for instance,  four symbols suffice to  homomorphically generate any regular binary language. 
\par
In the main proof we have offered a specific and fairly optimized construction of the strictly locally testable language, for which  we have derived the relationship between the width,   the alphabetic ratio, 
and the complexity of the regular language.  In our opinion, the construction  should be of its own interest, as a new technique for simulating a NFA by means of a larger, yet strictly locally testable, machine. 
Our encoding is asymptotically optimal with respect to language complexity, and remains very close to the theoretical optimum for finite values of complexity.
But it is an open technical question whether a different construction  would yield better values  for the alphabetic ratio and the width parameter.
\par
Applications and developments of our result are conceivable in areas where a language characterization \textit{\`{a} la} Medvedev has been found valuable, as in the next ones.
\\
Picture languages. A main family of 2-dimensional languages, the tiling systems \cite{GiammRestivo1997}, is defined by a 2-dimensional Medvedev characterization. Does our result extend to 2D languages?
\\
Context-free languages. Combining our result with the Chomsky-Schutzenberger theorem it should be possible to obtain non-erasing homomorphic characterizations using a small alphabet.
\\
Consensual languages \cite{CrespiSpietro2001}. This generalization of finite-state machines motivated by modelling tightly connected concurrent computations uses homomorphism between words as its core mechanism.
\\
Information transmission for reducing the receiver cost was already mentioned in the introduction.
\par
{\textit{Acknowledgments:} } Thanks to Aldo De Luca for suggesting relevant references.
\bibliography{FormalLangBib}
\bibliographystyle{eptcs}
\end{document}